\newtheorem{theorem}{Theorem}
\newtheorem{lemma}{Lemma}
\definecolor{codegreen}{rgb}{0,0.6,0}
\definecolor{codegray}{rgb}{0.5,0.5,0.5}
\definecolor{codepurple}{rgb}{0.58,0,0.82}
\definecolor{backcolour}{rgb}{0.95,0.95,0.92}
\lstdefinestyle{mystyle}{
    backgroundcolor=\color{backcolour},   
    commentstyle=\color{codegreen},
    keywordstyle=\color{magenta},
    numberstyle=\tiny\color{codegray},
    stringstyle=\color{codepurple},
    basicstyle=\ttfamily\footnotesize,
    breakatwhitespace=false,         
    breaklines=true,                 
    captionpos=b,                    
    keepspaces=true,                 
    numbers=left,                    
    numbersep=5pt,                  
    showspaces=false,                
    showstringspaces=false,
    showtabs=false,                  
    tabsize=2
}
\def\myauthor{Benjamin Heymann\\
Criteo AI Lab, Fairplay joint team, 
Paris, France}
\def\mytitle{Side-by-side first-price auctions with imperfect bidders}
\def\myabstract{We model a procurement scenario in which two \textit{imperfect} bidders act simultaneously on behalf of a single buyer, a configuration common in display advertising and referred to as  \textit{side-by-side bidding}  but largely unexplored in theory. We  prove that the iterated best response algorithm converges to an equilibrium under standard distributional assumptions and provide sufficient condition for uniqueness. Beyond establishing existence and convergence, our analysis provides a tractable numerical method for  quantitative studies of side-by-side procurement.
} \title{\mytitle}
\author{\myauthor}
\begin{document}
\maketitle
\begin{abstract}
\myabstract
\end{abstract}
\section{Context}

A buyer hires two bidding agents to purchase inventory through first-price auctions. The rationale behind using two agents is that they may each have different strengths, such as access to unique suppliers. However, when both agents participate in the same auction, they compete against each other, which has a direct and an indirect effect.

The direct effect is mechanical: when one agent bids higher, it reduces the other’s probability of winning, and the buyer ultimately pays the higher of the two bids. The indirect effect arises through learning, as each agent infers the other’s bidding strategy from observed auction outcomes and gradually adapts its own bids. This indirect effect is harder to analyze without reference to an equilibrium concept, since it depends on the mutual adaptation of both agents’ strategies.

Having two  bids for the same buyer is called side-by-side bidding, and  is a well known problem in  ad-tech  that already existed before Google switch to the first price rule. At that time, the auctions mostly followed a second pricing scheme, which made the effect even worse~\cite{heymann2020bid} for the buyer.

 In this note, we show that sequential best-response iterations  converge to a unique equilibrium in the specific case of two bidders acting on behalf of a single buyer. 
Best-response iterations is a very old adaptive procedure  that was first  described for Cournot's oligopoly model~\cite{cournot1863principes} even  before the  formalization of game theory by von Neumann and Morgenstern~\cite{von1947theory}.
 In general, computing equilibria in first-price auctions requires specialized methods, as standard learning dynamics --- such as best-response, but also fictitious play and no-regret updates--- do not guarantee convergence to an equilibrium.
 
 Our analysis assumes that competition from other market participants is exogenous and fixed, reflecting a standard  assumption in large markets with small actors. An important modeling and technical assumption is that we consider \textit{imperfect bidders} -- bidders that bid a noisy version of their best response to the competition, which can either be  due to execution inaccuracies or deliberate exploration for learning purposes.  We further make the simplifying assumption that  the multiplicative noise distribution is stationary over time.

The concept of imperfect bidders is particularly pertinent  to analyze  the winner's curse, whereby the winning bidder pays more than necessary due to incomplete information. In side-by-side settings, imperfect bidders may exacerbate this issue by making noise-influenced bids that can cause overbidding, leading to higher costs for the buyer.

In terms of technical tools, we rely on the notion of log-concavity, which is a classic assumption in economics~\cite{bagnoli2005log}. 
A function $f:\mathbb{R^+}\to\mathbb{R^+_*}$ is said to be \textit{log-concave} when $\log f$ is concave. The convergence result relies on Berge's maximum principle~\cite{berge1877topological}, and the uniqueness result on Tarki fixed-point theorem~\cite{tarski1955lattice}, as it allows us to extract minimal and maximal putative equilibria.

\section{Model}

We use the index $i \in \{1,2\}$ to refer to a given bidder, and $-i$ to refer to the other. 
The bidders use the same value, $v > 0$, which is 
 determined by the buyer.
The submitted bid $\hat{b}_i$ is a perturbed version of the payoff maximizing bid $b^\star_i$, either due to execution inaccuracies or deliberate exploration for learning purposes, formally
\begin{align}
\label{eq:bhat}
\hat{b}_i &= b_i^\star  \cdot \varepsilon_i \ , \\
\pi_i(b,b_{-i}^\star)&= (v - b) \cdot Q(b) \cdot F_{-i}(b)\ ,\\
\label{eq:bstar}
    b_i^\star \in BR_i(F_{-i}) &= \arg\max_b \pi_i(b,b_{-i}^\star)\ ,
\end{align}
where 
$Q$
    is the CDF of the highest bid of the exogenous competition. 
    It is supposed smooth, log-concave and with  $Q'(t)>0$ for $t\in [0,v]$;
   $\varepsilon_i$ is a random variable of  log-concave  CDF $N_i$ on $\mathbb{R}_+$\footnote{    Typically, we might have a multiplicative    lognormal noise of mean $1$, but our results hold with a  larger class of assumption. Also, those  results could be adapted to additive noise}, and $F_{-i} = N_{-i}(\nicefrac{\cdot} {b_{-i}})$  is the CDF of $\hat{b}_{-i}$.
Since  $F_i$ only depends on $b_i$,  we can abusively write
$BR_i(F_{-i}) = BR_i(b_{-i}^\star)$.

In first-price auction, the study of side-by-side bidding requires accounting for the game dynamics, otherwise, the model  predicts  an unrealistic outcome. The noise plays a key role here. First, it allows justifying  why  the bidders have access to payoff estimates. 
Indeed, the perturbation of the best response  provides a way to model how bidders anticipate what would happen under alternative actions, and recover a realistic notion of equilibrium.
 Second, it smooths the dynamics in a way that is favorable for analysis. 
Third, the perturbation also allows us to account for bidder heterogeneity by encoding differences in performance within the parameter~$\varepsilon_i$.

\section{Convergence of best   response iterations}

Our first result is that the \textbf{best-response iterations converge to an equilibrium}.
\begin{theorem}
\label{th:cv}
Let 
$b_2^{0}$ from $[0,v]$.
Let
    \begin{align*}
    b_1^{(k)}&\in BR_1(b_{2}^{(k-1)}),\quad &\quad & \forall k\in\mathbb{N}^\star,\\
        b_2^{(k)}&\in BR_2(b_{1}^{(k)}),\quad &\quad & \forall k\in\mathbb{N}^\star,
    \end{align*}
    then 
    \begin{align}
     &\exists \lim_{k\to \infty}  b_i^{(k)} = b_i^\star \quad &\forall i\in\{1,2\} \tag{limit}\\
     &b_i^{\star}\in BR_i(b_{-i}^{\star}) \quad &\forall i\in\{1,2\} \tag{Equilibrium}.
    \end{align}
\end{theorem}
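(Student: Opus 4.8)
The plan is to realize the full iteration as repeated application of the composite map $T := BR_2 \circ BR_1$ on the compact interval $[0,v]$, to show that $T$ is a continuous and monotone self-map, and then to invoke monotone convergence on a totally ordered set.

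First I would establish that each $BR_i$ is a well-defined continuous function. Writing $\log \pi_i(b) = \log(v-b) + \log Q(b) + \log N_{-i}(b/b_{-i})$, the first summand is strictly concave, the second is concave because $Q$ is log-concave, and the third is concave because $N_{-i}$ is log-concave and concavity of the log is preserved under the positive scaling $b \mapsto b/b_{-i}$. Hence $\pi_i(\cdot, b_{-i})$ is strictly log-concave on the interval where it is positive and vanishes elsewhere on $[0,v]$, so its maximizer is interior and unique and $BR_i$ is single-valued. Joint continuity of $(b,b_{-i}) \mapsto \pi_i(b,b_{-i})$ together with the constant compact feasible set $[0,v]$ lets me apply Berge's maximum theorem, which yields upper hemicontinuity of the argmax; combined with single-valuedness this gives continuity of $BR_i : [0,v] \to [0,v]$ (the degenerate value $b_{-i}=0$ is handled by the convention $F_{-i}\equiv 1$, under which $\pi_i$ stays continuous).

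The crucial step is monotonicity of each $BR_i$ in $b_{-i}$, which I would obtain from Topkis' monotone comparative statics. Since only the last summand of $\log \pi_i$ depends on $b_{-i}$, it suffices to sign the cross derivative $\partial^2_{b\,b_{-i}} \log N_{-i}(b/b_{-i})$. A short computation reduces this to the sign of $x\,h''(x) + h'(x)$, where $h = \log N_{-i}$ and $x = b/b_{-i}$, i.e. to whether $x \mapsto x\,h'(x)$ is monotone, equivalently whether $u \mapsto N_{-i}(e^u)$ is log-concave (the distribution function of $\log \varepsilon_{-i}$ is log-concave). This holds for the lognormal noise of the footnote and is the precise form in which the log-concavity hypothesis gets used; it forces $\log\pi_i$ to have differences of a fixed sign in $(b,b_{-i})$, so that the unique maximizer $BR_i$ is monotone, with the same direction for $i=1,2$ under the common hypothesis on $N_1,N_2$. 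I expect this to be the main obstacle, both because the direction of the comparative statics must be pinned down and because the bare assumption that $N_i$ is log-concave on $\mathbb{R}_+$ must be upgraded to log-concavity on the multiplicative scale.

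Granting monotonicity, the composite $T = BR_2 \circ BR_1$ is a composition of two maps of the same monotonicity, hence nondecreasing, as well as continuous, on $[0,v]$; and the iteration of the theorem is exactly $b_2^{(k)} = T(b_2^{(k-1)})$. Because $[0,v]$ is totally ordered, $b_2^{(0)}$ and $T(b_2^{(0)})$ are comparable, so monotonicity makes $\{b_2^{(k)}\}$ a monotone sequence bounded in $[0,v]$, hence convergent to some $b_2^\star$, and continuity of $T$ gives $b_2^\star = T(b_2^\star)$. Setting $b_1^\star := BR_1(b_2^\star)$ then yields $b_1^\star\in BR_1(b_2^\star)$ and $b_2^\star = BR_2(b_1^\star)\in BR_2(b_1^\star)$, i.e. an equilibrium, while continuity of $BR_1$ gives $b_1^{(k)} = BR_1(b_2^{(k-1)}) \to BR_1(b_2^\star) = b_1^\star$, establishing both the $(\mathrm{limit})$ and $(\mathrm{Equilibrium})$ claims.
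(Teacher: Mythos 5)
Your proof is correct and follows essentially the same route as the paper's: both hinge on showing that each $BR_i$ is single-valued, continuous, and non-decreasing in $b_{-i}$, and then conclude by monotone bounded convergence on $[0,v]$ plus continuity (Berge) at the limit. The only substantive difference is how monotonicity is obtained: you sign the cross-partial of $\log\pi_i$ (Topkis), while the paper argues from the crossing of the first-order condition $\frac{1}{v-b_i}=\frac{Q'(b_i)}{Q(b_i)}+\frac{F'_{-i}(b_i)}{F_{-i}(b_i)}$, asserting under a brace that the last term is increasing in $b_{-i}^\star$. These are the same computation in disguise, and your explicit reduction to the condition that $x\mapsto x\,(\log N_{-i})'(x)$ be non-increasing --- equivalently, that $u\mapsto N_{-i}(e^u)$ be log-concave --- pins down exactly what the paper's underbraced claim needs. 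Your caveat is well taken: that condition does \emph{not} follow from log-concavity of $N_{-i}$ on $\mathbb{R}_+$ alone (for instance, a normal CDF with mean $1$ and small standard deviation, viewed on $\mathbb{R}_+$, is log-concave, yet $x N'(x)/N(x)\approx x(1-x)/\sigma^2$ is increasing on part of $(0,\tfrac12)$ by the Mills-ratio approximation), although it does hold for the lognormal noise of the paper's footnote. So the ``upgrade'' of the hypothesis you flag is not a defect of your argument relative to the paper's: the paper relies on the same strengthened property implicitly and without proof, and your version has the merit of making the exact distributional requirement explicit.
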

\begin{proof}
We first show that the best response $BR^i$ is  increasing in $b^{-i}$.
Since $Q$ and $F_{-i}$ are differentiable, 
 so is  their product
 $\gamma(b):=  Q(b) \cdot F_{-i}(b)$. 
Therefore by~\eqref{eq:bstar} the optimal bid
$b_i^\star$ satisfies the first order condition
\begin{align}
\label{eq:foc}
    \frac{d}{db_i^\star}[(v-b_i^\star) \cdot \gamma(b_i^\star)] = 0.
\end{align}
By definition of $Q$, $\gamma'>0$ on $[0,v]$, therefore~\eqref{eq:foc}  simplifies into
\begin{align*}
b_i^\star=v-\nicefrac{\gamma(b_i^\star)}{\gamma'(b_i^\star)}= \phi(b_i^\star).  
\end{align*}
Since $Q$ and $F_{-i}$ are by definition log-concave, so is there product $\gamma$, hence by Lemma~\ref{lemma:fp}, $b^\star_i$ is uniquely defined.

Another equivalent way to characterize $b^\star_i$ is to apply to apply a first order condition on the logarithm of the expected payoff $\pi_i$, in which case, we get the condition
\begin{align}
    \underbrace{\frac{1}{v-b_i^\star}}_{\text{increasing in }b_i^\star} = \underbrace{\frac{Q'(b_i^\star)}{Q(b_i^\star)}}_{\text{decreasing in }b_i^\star}+ \underbrace{\frac{F'_{-i}(b_i^\star)}{F_{-i}(b_i^\star)}}_{\text{decreasing in }b_i^\star \text{, increasing in }b_{-i}^\star}
\end{align}
From there, it is easy to show ad absurdum that the best response $BR_i$ is non-decreasing in $b_{-i}^\star$
.

If $b_2^{(1)}\geq b_2^{(0)}$, then by monotony of $BR_1$, 
$b_1^{(2)}\geq b_1^{(1)}$, hence 
by monotony of $BR_2$,
$b_2^{(2)}\geq b_2^{(1)}$, hence the sequences $b_i^{(k)}$, for $i\in\{1,2\}$ are non-decreasing.
Similarly, if  $b_2^{(1)}\leq b_2^{(0)}$, then by monotony of $BR_1$, 
$b_1^{(2)}\leq b_1^{(1)}$, hence 
by monotony of $BR_2$,
$b_2^{(2)}\leq b_2^{(1)}$, hence the sequences $b_i^{(k)}$, for $i\in\{1,2\}$ are non-increasing.
Overall, the sequences $b_i^{(k)}$, for $i\in\{1,2\}$ are monotone.
Since they are valued in $[0,v]$, they converge to limits that we denote by $b^\star_i$.
Then, because  $\pi_i$ is continuous in $b$ and $b_{-i}$, Berge's Maximum principle~\cite{berge1877topological} implies that the argmax is upper-hemicontinuous~\footnote{A correspondence \( F: X \rightrightarrows Y \) is \textbf{upper hemicontinuous} if, whenever \( x_n \to x \) and \( y_n \in F(x_n) \) with \( y_n \to y \), we have $y\in F(x)$
}, it follows that $b_i^{\star}\in BR_i(b_{-i}^{\star})$.
\end{proof}

Our next result states that \textbf{the equilibrium of the best-response dynamics defined in Theorem~\ref{th:cv} is unique}. This uniqueness is not trivial. Indeed, we provide in the appendix an example violating the assumption on the support of $\varepsilon$ for which the uniqueness is not satisfied.

\begin{theorem}
The equilibrium $(b_1^\star,b_2^\star)$ does not depend on the starting point $b_2^{0}$.
\end{theorem}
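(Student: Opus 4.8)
The plan is to reduce the statement to the uniqueness of the fixed point of the composed best-response map, and then to collapse the two extremal equilibria supplied by Tarski's theorem by a single-crossing argument. Throughout I use that, for a fixed opponent bid, $BR_i$ is single-valued (Lemma~\ref{lemma:fp}), non-decreasing (Theorem~\ref{th:cv}), and continuous (the argmax correspondence is upper-hemicontinuous by Berge~\cite{berge1877topological} and single-valued, hence a genuine continuous function). Define the composed map $\Phi:[0,v]\to[0,v]$ by $\Phi(b_2)=BR_2(BR_1(b_2))$; it is a continuous, non-decreasing self-map of the complete lattice $[0,v]$, and the iteration of Theorem~\ref{th:cv} is exactly $b_2^{(k)}=\Phi(b_2^{(k-1)})$. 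Consequently every fixed point of $\Phi$ is a stationary starting point with constant trajectory, so the assertion that the limit is independent of $b_2^{0}$ is equivalent to the assertion that $\Phi$ has a unique fixed point, which is what I would establish.

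First I would apply Tarski's fixed-point theorem~\cite{tarski1955lattice} to $\Phi$: its fixed-point set is a non-empty complete lattice, from which I extract a least fixed point $\underline{b}_2$ and a greatest fixed point $\overline{b}_2$, with the corresponding bids of player $1$ recovered through $BR_1$. By monotonicity of $\Phi$, the iteration started at $b_2^{0}=0$ increases to $\underline{b}_2$ and the one started at $b_2^{0}=v$ decreases to $\overline{b}_2$, while any intermediate start is squeezed between these two trajectories; it therefore suffices to prove $\underline{b}_2=\overline{b}_2$.

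To conclude I would argue by contradiction, assuming $\underline{b}_2<\overline{b}_2$, and study the excess map $G(b):=\Phi(b)-b$, whose zeros are exactly the fixed points. The engine is the estimate that at every fixed point the composed response is strictly contractive, $\Phi'=BR_1'\cdot BR_2'<1$, so that $G'<0$ at each zero. Let $x_1$ be the largest fixed point strictly below $\overline{b}_2$ (it exists, since $\underline{b}_2$ is such a point and the fixed-point set is closed). On $(x_1,\overline{b}_2)$ there is no zero, so $G$ has constant sign there; since $G'(\overline{b}_2)<0$ forces $G>0$ immediately to the left of $\overline{b}_2$, we get $G>0$ on all of $(x_1,\overline{b}_2)$, hence $G>0$ immediately to the right of $x_1$. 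But $G(x_1)=0$ together with $G'(x_1)<0$ forces $G<0$ immediately to the right of $x_1$, a contradiction. Thus $\underline{b}_2=\overline{b}_2$ and the fixed point is unique.

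The hard part will be the contraction estimate $BR_1'\cdot BR_2'<1$, which is the only place where the full strength of the hypotheses is needed. I would obtain each $BR_i'$ by implicitly differentiating the logarithmic first-order condition $\tfrac{1}{v-b_i}=\tfrac{Q'(b_i)}{Q(b_i)}+\tfrac{F_{-i}'(b_i)}{F_{-i}(b_i)}$: log-concavity of $Q$ and of the noise CDF makes the resulting denominator strictly positive, while $F_{-i}(\cdot)=N_{-i}(\cdot/b_{-i})$ turns the numerator into a log-derivative of $N_{-i}$ evaluated at $b_i/b_{-i}$. The subtlety is that a single slope $BR_i'$ need not lie below one, since the multiplicative coupling can make $BR_i$ locally steep when $b_i/b_{-i}>1$; the bound must therefore be proved for the \emph{product}, by combining both players' first-order conditions. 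This is precisely where the assumption on the support of $\varepsilon_i$ is indispensable: the appendix counterexample exhibits a violation of that assumption for which the product attains one and the extremal equilibria genuinely separate, so any correct proof of the estimate must invoke it.
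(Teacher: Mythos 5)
Your setup mirrors the paper's first step: you invoke Tarski's theorem to extract extremal fixed points and reduce the statement to showing they coincide (the squeeze argument by which any trajectory is trapped between the iterations started at $0$ and at $v$ is fine, given monotonicity). The problem is what comes after. Your entire uniqueness engine rests on the pointwise contraction estimate $\Phi'=BR_1'\cdot BR_2'<1$ at every fixed point, and this estimate is never proved: you sketch an implicit differentiation of the logarithmic first-order condition, then concede yourself that a single slope $BR_i'$ can exceed one, that the bound must be established for the product by ``combining both players' first-order conditions,'' and that this is ``precisely where the assumption on the support of $\varepsilon_i$ is indispensable'' --- without ever carrying out that combination. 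This is the hard part of the theorem, and deferring it means the proof has a hole exactly where the content lies. A secondary gap is that your sign argument on $G(b)=\Phi(b)-b$ presupposes that $BR_i$ is differentiable, which requires an implicit-function-theorem argument (non-degeneracy of the second-order condition) that you also do not supply; upper hemicontinuity plus single-valuedness gives continuity, not differentiability.

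The paper closes this gap by a route that avoids derivatives of the best response altogether, exploiting the multiplicative structure of the noise, $F_{-i}(\cdot)=N_{-i}(\cdot/b_{-i})$. Setting $\alpha=\nicefrac{b_1^+}{b_1^-}$ (taken without loss of generality to be the larger of the two equilibrium ratios, and $\geq 1$), it shows
\begin{align*}
b_1^+ \;=\; BR_1(b_2^+) \;\leq\; BR_1(\alpha b_2^-) \;<\; \alpha\, BR_1(b_2^-) \;=\; \alpha b_1^-,
\end{align*}
where the first inequality is monotonicity of $BR_1$ (as in your argument) and the strict inequality is Lemma~\ref{lemma:homogeneity}: after the change of variable $b=\alpha b'$, the map $\alpha\mapsto\argmax_{b'}\pi(\alpha b',\alpha b_{-i}^\star)$ is strictly decreasing. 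The chain contradicts $b_1^+=\alpha b_1^-$. In effect the paper proves a strict \emph{subhomogeneity} property $BR_1(\alpha b_2)<\alpha BR_1(b_2)$ for $\alpha>1$ --- a contraction in ratio (multiplicative) terms rather than in slope (additive) terms --- which is exactly the scale-invariant analogue of the estimate you were seeking, and which the multiplicative noise structure makes available in closed form. If you want to salvage your approach, the realistic fix is to replace your unproven bound $\Phi'<1$ by this subhomogeneity property applied to the two Tarski extremal equilibria; attempting the derivative bound directly is both harder and unnecessary.
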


We leverage Tarski's theorem, which applies to monotone operators on complete lattices, which tells us that the set of fixed points is a  non-empty complete lattice. 

\begin{proof}
By Tarski~\cite{tarski1955lattice,topkis}, the set of equilibrium in a non-empty complete lattice. 
So it make sense to denote by $(b_1^-,b_2^-)$ and
        $(b_1^+,b_2^+)$  the biggest and smallest equilibrium. 
        Without loss of generality, we suppose, 
        $\nicefrac{b_1^+}{b_1^-}\geq \nicefrac{b_2^+}{b_2^-}\geq 1$, and we set $\alpha = \nicefrac{b_1^+}{b_1^-}$.
 We  then show the following relations: 
\begin{align}
\label{eq:trick}
    b_1^+ \underbrace{\leq}_{\clubsuit} BR_1(\alpha b_2^-)\underbrace{<}_{\spadesuit} \alpha b_1^-
\end{align}
We start with $\clubsuit$. By assumption
    $\alpha b_2^-\geq b_2^+$ and $BR_1$ is non-decreasing (Proof of Theorem~\ref{th:cv}), therefore 
        $b_1^+ = BR_1(b_2^+)\leq BR_1(\alpha b_2^-)$.
 
Now, to show  $\spadesuit$, by definition
        \begin{align*}
        BR_1(\alpha b_2^-) &= \arg\max_b (v - b) \cdot Q(b) \cdot \Pr(\varepsilon_2\alpha b_2^- \leq b)\\
    \intertext{by change of variable $b=\alpha b'$}
        &=
        \alpha\arg\max_{b'} (v - \alpha b') \cdot Q(\alpha b') \cdot \Pr(\varepsilon_2\alpha b_2^- \leq \alpha b')\\
     \intertext{\text{Lemma}~\ref{lemma:homogeneity}}
       &<\alpha\arg\max_{b'} (v -  b') \cdot Q( b') \cdot \Pr( \varepsilon_2 b_2^- \leq  b')\\
       & = \alpha BR_1(b_2^-)\\
       & =\alpha b_1^-.
    \end{align*}

 We now have proven~\eqref{eq:trick}, which is in contradiction with the definition of $\alpha$.
We conclude that  the equilibrium is unique. 
\end{proof}

The convergence of the best-response iteration provides a natural justification for defining its limit as an equilibrium concept.
On top of that,  it provides a theoretical validation for estimating market outcomes through  the best-response iterations.

\section{Technical lemmata}

\begin{lemma}
\label{lemma:fp}
Let $\gamma:[0,v]\to [0,\infty[$ be a continuously differentiable function such that 
$\log\gamma$ is concave on $[0,v]$ and $\gamma'>0$.  
Then function
$
  \Phi(b) = v - \frac{\gamma(b)}{\gamma'(b)}
$
 is monotone on $[0,v]$, in particular, any fixed point of $\Phi$ is unique.
\end{lemma}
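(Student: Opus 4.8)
The plan is to reduce the monotonicity of $\Phi$ to the monotonicity of the logarithmic derivative $\gamma'/\gamma$, which is precisely the quantity controlled by log-concavity. First I would note that since $\gamma' > 0$, the function $\gamma$ is strictly increasing, so $\gamma > 0$ on $(0,v]$ and can vanish at most at the left endpoint $b=0$. On the open interval where $\gamma > 0$, the composition $\log\gamma$ is continuously differentiable with $(\log\gamma)'(b) = \gamma'(b)/\gamma(b)$, and concavity of $\log\gamma$ is equivalent to $(\log\gamma)'$ being non-increasing. Hence $b \mapsto \gamma'(b)/\gamma(b)$ is non-increasing on $(0,v]$, with no need to invoke second derivatives.

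Next I would exploit that $\gamma'/\gamma$ is not only non-increasing but strictly positive (as $\gamma'>0$ and $\gamma>0$). The reciprocal of a strictly positive, non-increasing function is non-decreasing, so $b \mapsto \gamma(b)/\gamma'(b)$ is non-decreasing. Subtracting a non-decreasing function from the constant $v$ shows that $\Phi(b) = v - \gamma(b)/\gamma'(b)$ is non-increasing, i.e.\ monotone, on $(0,v]$. If $\gamma(0)=0$, then $\gamma(0)/\gamma'(0)=0$ and $\Phi(0)=v$, the maximal value, so monotonicity extends to all of $[0,v]$ by continuity; otherwise the argument already covers the full interval.

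For the uniqueness of fixed points I would introduce $g(b) := \Phi(b) - b$. Since $\Phi$ is non-increasing and $b \mapsto -b$ is strictly decreasing, $g$ is strictly decreasing on $[0,v]$ and therefore has at most one zero. Every fixed point of $\Phi$ is a zero of $g$, so there can be at most one, which is exactly the claimed uniqueness. The main point requiring care is bookkeeping on the directions of the inequalities --- log-concavity makes $\gamma'/\gamma$ decrease, and it is the \emph{reciprocal} that governs $\Phi$ --- together with the mild boundary subtlety that $\gamma$ may vanish at $b=0$, which I handle by the strict monotonicity of $\gamma$ and a continuity argument. No deeper obstacle arises: once the monotone direction is pinned down, uniqueness is immediate because a monotone map can meet the identity at most once.
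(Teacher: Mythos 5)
Your proof is correct and follows essentially the same route as the paper: log-concavity makes $\gamma'/\gamma$ non-increasing, taking the reciprocal makes $\gamma/\gamma'$ non-decreasing, hence $\Phi$ is non-increasing and can meet the identity at most once. The extra care you take at the boundary $b=0$ and the explicit argument via $g(b)=\Phi(b)-b$ being strictly decreasing are refinements the paper leaves implicit, but the core argument is identical.
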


\begin{proof}
Concavity of $\log\gamma$ implies that 
$
 \frac{\gamma'(x)}{\gamma(x)}
$
is nonincreasing on $(0,v)$,  therefore its inverse,  $\frac{\gamma(x)}{\gamma'(x)}$ is nondecreasing, and therefore $\Phi(b)$ is nonincreasing.
Therefore the solution to $b=\Phi(b)$, if it exists, is unique.
\end{proof}

\begin{lemma}
\label{lemma:homogeneity}
For any positive numbers $b$ and $b^\star_{-i}$,
           $\arg\max_{b} \pi(\alpha b,\alpha b_{-i}^\star)$ is strictly decreasing in $\alpha$.
\end{lemma}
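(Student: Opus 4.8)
The plan is to first reduce the objective to a clean product form and then run a monotone-comparative-statics argument on its first-order condition. Writing out $\pi(\alpha b,\alpha b^\star_{-i}) = (v-\alpha b)\,Q(\alpha b)\,\Pr(\varepsilon_{-i}\,\alpha b^\star_{-i}\le \alpha b)$ and cancelling the common factor $\alpha$ inside the last probability, the opponent term becomes $G(b):=N_{-i}(b/b^\star_{-i})$, which does \emph{not} depend on $\alpha$. Hence, setting $P(x):=(v-x)Q(x)$, the objective factorizes as $\pi(\alpha b,\alpha b^\star_{-i}) = P(\alpha b)\,G(b)$. This cancellation is the whole point of scaling both arguments simultaneously, and it is what makes the statement tractable.

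Next I would record the curvature facts. Since $\log(v-x)$ is strictly concave on $[0,v)$ and $\log Q$ is concave, $\log P$ is strictly concave, so $(\log P)''<0$ and $(\log P)'$ is strictly decreasing; and $G$, being a rescaling of the log-concave CDF $N_{-i}$, is log-concave and nondecreasing, so $(\log G)'\ge 0$. Because $\pi(\alpha\,\cdot,\alpha b^\star_{-i})$ vanishes at both ends of the bidding interval and is positive inside, the maximizer $b^\star(\alpha):=\arg\max_b P(\alpha b)G(b)$ is interior and, by strict concavity of $b\mapsto \log[P(\alpha b)G(b)]$, unique; it is characterized by the first-order condition $\Psi(b,\alpha):=\alpha(\log P)'(\alpha b)+(\log G)'(b)=0$, with $\Psi(\cdot,\alpha)$ strictly decreasing in $b$.

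The main obstacle is that the cross-partial $\partial_\alpha\Psi = (\log P)'(\alpha b)+\alpha b\,(\log P)''(\alpha b)$ is not signed globally: the first summand is positive to the left of the peak of $P$, so a naive Topkis/submodularity argument applied on the whole domain fails. The resolution is to evaluate along the optimal path rather than everywhere. The first-order condition gives $(\log P)'(\alpha b^\star(\alpha)) = -\tfrac{1}{\alpha}(\log G)'(b^\star(\alpha))\le 0$, so at the optimum the marginal of $\log P$ is already nonpositive. Fix $\alpha_1<\alpha_2$ and freeze $b:=b^\star(\alpha_1)$. For every $\alpha\ge\alpha_1$ one has $\alpha b \ge \alpha_1 b>0$, hence $(\log P)'(\alpha b)\le (\log P)'(\alpha_1 b)\le 0$ by monotonicity of $(\log P)'$, while $\alpha b\,(\log P)''(\alpha b)<0$ strictly because $(\log P)''<0$. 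Therefore $\partial_\alpha\Psi(b,\alpha)<0$ for all $\alpha\ge\alpha_1$.

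Integrating this strict inequality in $\alpha$ yields $\Psi(b^\star(\alpha_1),\alpha_2)<\Psi(b^\star(\alpha_1),\alpha_1)=0$. Since $\Psi(\cdot,\alpha_2)$ is strictly decreasing with unique zero $b^\star(\alpha_2)$, a strictly negative value at $b^\star(\alpha_1)$ forces $b^\star(\alpha_2)<b^\star(\alpha_1)$. This is exactly strict monotonicity of $\alpha\mapsto \arg\max_b\pi(\alpha b,\alpha b^\star_{-i})$, completing the proof. The only point needing a word of care is interiority and the strictness at the boundary of the support of $\varepsilon_{-i}$, but both are handled by the strict concavity of $\log P$ coming from the $\log(v-\cdot)$ factor, which is present regardless of the behaviour of $Q$ and $G$.
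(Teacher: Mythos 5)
Your proof is correct, and it follows the paper's reduction exactly up to the first-order condition: you cancel $\alpha$ inside the noise CDF (so the opponent term $G(b)=N_{-i}(b/b^\star_{-i})$ is $\alpha$-free) and characterize the maximizer by stationarity of the log-payoff, just as the paper does. Where you genuinely diverge is in how the comparative statics get signed. The paper divides the first-order condition by $\alpha$, writing it as
\begin{equation*}
\frac{1}{v-\alpha b}=\frac{Q'(\alpha b)}{Q(\alpha b)}+\frac{G'(b)}{\alpha\, G(b)},
\end{equation*}
and observes that the left side is strictly increasing in both $\alpha$ and $b$ while the right side is non-increasing in both; comparing the conditions at the two optima then yields the strict decrease ad absurdum, with no need to evaluate anything away from the maximizers. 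You instead keep the unnormalized condition $\Psi(b,\alpha)=\alpha(\log P)'(\alpha b)+(\log G)'(b)$, correctly note that $\partial_\alpha\Psi$ is not globally signed (so a naive Topkis argument fails), and restore the sign along the path $b=b^\star(\alpha_1)$, $\alpha\ge\alpha_1$, using the fact that the first-order condition forces $(\log P)'\le 0$ at any optimum. That is a valid but heavier resolution of the same difficulty: the paper's $\nicefrac{1}{\alpha}$ normalization makes each term monotone in $\alpha$ everywhere, which is precisely the global structure your $\Psi$ lacks. Your route also has one small exposed edge the paper's avoids: integrating $\partial_\alpha\Psi$ over $[\alpha_1,\alpha_2]$ at frozen $b=b^\star(\alpha_1)$ presumes $\alpha\, b^\star(\alpha_1)<v$ throughout, otherwise $(\log P)'(\alpha b)$ leaves its domain. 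The case $\alpha_2\, b^\star(\alpha_1)\ge v$ is harmless --- then $b^\star(\alpha_2)<v/\alpha_2\le b^\star(\alpha_1)$ trivially, since an optimal bid must keep the margin $v-\alpha b$ positive --- but it should be stated, whereas the paper's argument only ever evaluates the first-order condition at the two optima themselves and never meets this issue.
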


\begin{proof}
   \begin{align*}
   b\in \arg\max_{b} \pi(\alpha b,\alpha b_{-i}^\star)\\
   &\iff    b\in  \arg\max_{b'} (v - \alpha b') \cdot Q(\alpha b') \cdot \Pr(\varepsilon_{-i}\alpha b_{-i}^\star\leq \alpha b')\\
        &\implies  \frac{\alpha}{v - \alpha b} = \alpha\frac{Q'(\alpha b)}{Q(\alpha b)}+ \frac{F_i'(b)}{F_i(b)} \\
             &\implies  \frac{1}{v - \alpha b} = \frac{Q'(\alpha b)}{Q(\alpha b)}+ \frac{ F_i'(b)}{\alpha F_i(b)} \ .
   \end{align*} 
 The LHS is strictly increasing in $\alpha$ and $b$, while the RHS is non-increasing in $\alpha$ and $b$. The conclusion follows.
\end{proof}
We next illustrate why  the equilibrium $b^\star$ might not be unique if the assumptions are not satisfied. 
\begin{lemma}[Non-uniqueness in general]
   Take $v=1$ and  $N_1=N_2=Q = \min(x,1)$, then any value between $1/2$ and $2/3$ is a valid value for $b^\star$. 
\end{lemma}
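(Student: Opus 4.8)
The plan is to exploit the symmetry $N_1 = N_2$ and to verify directly that, for every $\beta \in [1/2, 2/3]$, the symmetric profile $(b_1^\star, b_2^\star) = (\beta, \beta)$ is a fixed point of the best-response map, i.e. $\beta \in BR(\beta)$. Since $N_1 = N_2$ and both bidders share $v = 1$, the two best-response maps coincide, so it suffices to check a single scalar condition for each candidate value $\beta$.

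First I would substitute the data. With $v = 1$ and $Q(b) = \min(b,1) = b$ on $[0,1]$, and with the opponent playing $b_{-i}^\star = \beta$, the opponent's bid CDF is $F_{-i}(b) = N_{-i}(b/\beta) = \min(b/\beta, 1)$. The objective to maximize is therefore
\[
\pi(b) = (1-b)\, b \, \min\!\big(b/\beta,\, 1\big),
\]
which I would split at the kink $b = \beta$ into two smooth branches: on $[0,\beta]$ it equals $(1-b)b^2/\beta$, and on $[\beta,1]$ it equals $(1-b)b$.

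The heart of the argument is that the global maximizer sits exactly at the non-smooth point $b = \beta$, so the first-order condition used in the main proofs does not apply; instead I would establish optimality through the monotonicity of the two branches. On $[0,\beta]$ the function $(1-b)b^2$ is increasing on $[0,2/3]$, so for $\beta \leq 2/3$ the left branch is maximized at its right endpoint $b = \beta$. On $[\beta, 1]$ the concave parabola $(1-b)b$ is decreasing on $[1/2,1]$, so for $\beta \geq 1/2$ the right branch is maximized at its left endpoint $b = \beta$. The two branches agree at $b = \beta$ with common value $(1-\beta)\beta$, so $\pi$ attains its maximum there, giving $\beta \in BR(\beta)$. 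Hence $(\beta,\beta)$ is an equilibrium for every $\beta \in [1/2, 2/3]$, producing a continuum of equilibria and defeating uniqueness.

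I expect the main obstacle to be conceptual rather than computational: because $\varepsilon$ has bounded support $[0,1]$, the opponent CDF $F_{-i}$ is flat (equal to $1$, hence of zero density) above $b_{-i}^\star$, so the optimum lands on the kink where the smooth characterization underlying Theorem~\ref{th:cv} and the strict inequality of Lemma~\ref{lemma:homogeneity} silently fail. Recognizing that the solution is a kink/corner optimum --- and therefore verifying it by comparing the left- and right-hand monotonicity of the two branches rather than by solving $\Phi(b)=b$ --- is the single step requiring care. If a sharper statement is wanted, I would additionally note that $BR$ maps every input into $[1/2,2/3]$, which forces \emph{all} equilibria to be symmetric of the form $(\beta,\beta)$ with $\beta \in [1/2,2/3]$, confirming that this interval describes exactly the equilibrium set.
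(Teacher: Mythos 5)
Your proof is correct and follows essentially the same route as the paper's: both split the profit $(1-b)\,b\,\min(b/\beta,1)$ at the kink $b=\beta$ and show each branch is maximized at that kink (the paper via the piecewise derivative, you via monotonicity of the two branches, which is the same computation). The only slight difference is that you handle the closed interval $[1/2,2/3]$ while the paper restricts to the open interval $]1/2,2/3[$ to get strict monotonicity, a marginal strengthening rather than a different argument.
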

\begin{proof}
The profit for bidder $i$ writes
\begin{align*}
    \pi_i(b_i,b_{-i}) = (v - b_i) \min(b_i,1) \cdot \min(\nicefrac{b_i}{b_{-i}},1) =\begin{cases}
     (v - b_i) b_i^2/b_{-i} \quad & \text{if} \quad b_i\leq b_{-i}\leq 1 \\
      (v - b_i) b_i \quad &\text{if} \quad 1\geq b_i\geq b_{-i} \ .
    \end{cases}
\end{align*}
Therefore if $b_{-i}\in ]1/2,2/3[$
\begin{align*}
    \frac{\partial\pi_i(b_i,b_{-i})}{\partial b_i} =  \begin{cases}
     \frac{b_i}{b_{-i}}(2v - 3 b_i) \quad & \text{if} \quad b_i\leq b_{-i}\leq 1 \\
      v - 2b_i  \quad &\text{if} \quad 1\geq b_i\geq b_{-i} \ .
    \end{cases}
\end{align*}
Hence  $\pi_i(b_i,b_{-i})$ is strictly increasing for $b_i<b_{-i}$ and 
strictly decreasing for $b_i>b_{-i}$.
Hence the best response to $b_{-i}$ is $b_{-i}$. Therefore $(b_{-i},b_{-i})$ is an equilibrium,  for any $b_{-i}\in ]1/2,2/3[$.
\end{proof}

 \end{document}